\documentclass[a4paper,11pt,reqno]{amsart}
\usepackage{amsmath,amssymb,amsthm}
\usepackage{dsfont}

%\documentclass [14pt]{amsart}
%\usepackage{amsmath}
%\usepackage{amssymb}
%\usepackage{amscd}
%\usepackage{epsfig}
%\usepackage{color}

%%%%%%%%%%%%%%%%%%%%%%%%%% New theorem etc %%%%%%%%%%%%%%%
\newtheorem{lemma}{Lemma}[section]
\newtheorem{theorem}[lemma]{Theorem}

\newtheorem{proposition}[lemma]{Proposition}
\newtheorem{corollary}[lemma]{Corollary}

\theoremstyle{definition}
\newtheorem{definition}[lemma]{Definition}

\newtheorem*{remark}{Remark}

%%%%%%%%%%%%%%%%%%%%%%%%%%%%%%%%%%

\DeclareMathOperator{\supp}{supp}

\newcommand{\vL}{\varLambda}
\newcommand{\vG}{\varGamma}

\newcommand{\NN}{\mathbb{N}}

\newcommand{\ZZ}{\mathbb{Z}}

\newcommand{\RR}{\mathbb{R}}
\newcommand{\CC}{\mathbb{C}}

\newcommand{\dd}{\,\mathrm{d}}

\newcommand{\ts}{\hspace{0.5pt}}

\begin{document}
\title{Note on the set of Bragg peaks with high intensity}
%\title{A Note on Point Sets with Many Bragg Peaks  }

\author{Daniel Lenz}
\address{Institut f\"ur Mathematik, Friedrich Schiller Universit\"at Jena, 07743 Jena,
Germany}
\email{daniel.lenz@uni-jena.de}
\urladdr{http://www.analysis-lenz.uni-jena.de}

\author{Nicolae Strungaru}
\address{Department of Mathematical Sciences, MacEwan University
\\
10700 " 104 Avenue, Edmonton, AB, T5J 4S2;\\
and \\
Institute of Mathematics ``Simon Stoilow'' \\
Bucharest, Romania}
\email{strungarun@macewan.ca}
\urladdr{http://academic.macewan.ca/strungarun/}

\begin{abstract}
We consider  diffraction of Delone sets in Euclidean space. We show
that the set of Bragg peaks with high intensity is always Meyer (if
it is relatively dense). We use this to provide  a new
characterization for Meyer sets in terms of positive and positive
definite measures. Our results are based on a careful study of
positive definite measures, which may be of interest in its own
right.
\end{abstract}

\maketitle
\section*{Introduction}
Since the first diffraction of a crystal experiment was performed by
Max von Laue in 1912, physical diffraction has been the most
powerful tool for obtaining  insights in the atomic structure of
crystals.

The diffraction pattern of a fully periodic crystal consists of
bright spots, called {Bragg peaks}, which appear at very precise
locations: on the dual lattice to the lattice of periods of the
crystal.

The diffraction of more general solids is usually a mixture of Bragg
peaks and diffuse background ({continuous spectrum}), with random
structures showing the trivial Bragg peak only.

For a long time it was believed that only periodic crystals can
produce pure point spectra, viz.~a diffraction pattern consisting
exclusively of Bragg peaks. But in 1984, Shechtman et.~all
\cite{She} reported the discovery of a solid with pure point
diffraction and 5-fold symmetry, which is impossible in periodic
crystals. Because of this discovery, the International Union of
Crystallography redefined in 1991 the term of crystal to mean "any
solid having an \textit{essentially discrete} diffraction diagram"
\cite{crystal}.

For an overview of the precise mathematical setup of physical
diffraction we refer the reader to \cite[Chapter~9]{TAO} as well as
to the articles \cite{BG,HOF,Lenz2,Lenz3} (for background on the
physical theory see also \cite{JMC}). The diffraction pattern is
described via the diffraction measure. This measure arises  as the
Fourier transform of the autocorrelation measure  of the point set
(or, more general, measure) which represents the model of the solid.
By Lebesgue decomposition, the  diffraction measure  can be
decomposed into a discrete  part, corresponding to the Bragg
spectrum, and a continuous part, corresponding to the continuous
diffraction spectrum.

It is usually understood that the diffraction is essentially
discrete if the set of Bragg peaks is relatively dense. In this
context special subsets of Euclidean space have been prime examples.
These sets were introduced by Meyer in the 70s and later became
known as Meyer sets.  Indeed, the investigations of Meyer sets has
been central to the topic of diffraction, see e.g. \cite{MOO,BM,BLM}
and references therein. Meyer sets in Euclidean space do have  a
relatively dense set of Bragg peaks as had been suspected for a long
time and was finally shown in \cite{NS1}. Recent work of Kellendonk
/ Sadun \cite{KS} shows even a converse of some sort. More
precisely,  it gives that a dynamical system of Delone sets with
finite local complexity is conjugate to a dynamical system of Meyer
sets if it has a relatively dense set of continuous eigenvalues. In
this sense, Meyer sets seem 'unavoidable' when one deals with  sets
with many Bragg peaks.

The main result of this note  is Theorem \ref{T1} in  Section
\ref{section-main}. It  provides another, somewhat surprising,
instance of this unavoidability of Meyer sets. Namely, we show  for
ANY Delone set in Euclidean space that the  set of Bragg peaks with
high intensity is a Meyer set (provided it is relatively dense).
Therefore, Meyer sets appear in a natural way also in the Bragg
diffraction of any point sets with large sets of Bragg peaks of high
intensity.  If the underlying point set is Meyer itself the
requirement of high intensity can be dropped as has already been
shown in \cite{NS2,NS5}. This can be understood as saying that the
class of Meyer sets is characterized by some form of selfduality
under Fourier transform.  In this spirit, we use our main  result to
give a new characterization for Meyer sets at the end of this note.

Our main result follows  from a more general result dealing with
positive and positive definite measures in  Section
\ref{section-study}. In fact, that section  contains a study of
positive and positive definite measures which may be of interest in
its own right. The necessary background and notation is discussed in
Section \ref{section-background}.

\bigskip

{\large \sc \bf Acknowledgment:} {\small Part of the results
presented in this manuscript where obtained  when Nicolae Strungaru
visited Daniel Lenz and  Peter Stollmann at Chemnitz University some
years ago, and Nicolae would like to take this opportunity to thank
the University for the kind hospitality and Daniel would like to
thank the German Research Foundation (DFG) for partial support.
Nicolae would also like to thank NSERC for their support. Both
authors would like to thank the organizers of the Miniworkshop
'Dynamical and diffraction spectra in the theory of quasicrystals'
at  MFO in 2014, where this work was finished.}

\section{Background and notation}\label{section-background}
In this section we recall some basic concepts underlying our
considerations. We are mainly interested in special subsets of
Euclidean space (or, more generally, of a locally compact abelian
group). However, our statements and proofs can be very conveniently
 phrased in the framework of measures. For this reason we introduce
here both some background on measures and on sets.

\bigskip

For this entire section $G$ denotes a locally compact abelian group
(LCAG). We will denote by $C_{\mathsf{U}}(G)$ the space of uniformly
continuous bounded function on $G$, which is a Banach space with
respect to $\| \cdot \|_\infty$ norm. By  $C_{\mathsf{c}}(G)$ we
denote the subspace of $C_{\mathsf{U}}(G)$ of all compactly
supported functions. We define the convolution $f\ast g$ of
 $f,g\in C_{\mathsf{c}}(G)$ by
 $$f \ast g : G\longrightarrow \CC, x\mapsto \int_G f(x - y) g(y)
 dy,$$
 where $dy$ denotes integration with respect to the Haar measure on
 $G$. Then, $f\ast g$  can easily be seen to belong to $C_{\mathsf{c}}(G)$ as
 well. Moreover, for any complex-valued function $f$ on $G$ we define the function
 $\widetilde{f}$ on $G$ via $\widetilde{f}(x) =\overline{f (-x)}$.

%The definition of the inductive limit topology and
%a direct calculation yield
%\begin{equation*}
%   \sup\, \{ \lvert \mu(\psi)\rvert :
%   \lvert\psi\rvert\le\varphi \}  \; < \; \infty
%\end{equation*}
%for every $\mu\in\MM$ and all $\varphi\in C_c (G)_{+} :=
%\{ \varphi\in C_c (G) :
%    \varphi(t) \ge 0 \mbox{ for all } t\in G \}$.
%Therefore, \cite[Thm.~6.5.6]{Ped} together with its proof
%shows that the mapping
%\begin{equation*}
%   C_c (G)_{+} \; \longrightarrow \; \RR \; , \quad
%   \varphi \, \mapsto \, \sup\, \{ \lvert\mu (\psi) \rvert :
%   \psi \in C_c (G,\RR) \mbox{ with } \lvert\psi\rvert\le\varphi \}
%\end{equation*}
%can be extended to a linear functional on $C_c (G,\RR):=
%\{\varphi\in C_c (G) : \varphi(t)\in\RR \mbox{ for all } t\in G\}$,
%which is uniquely determined.
%This functional can further be extended uniquely to a  linear functional on
%$C_c (G) = C_c(G,\CC)$. This functional
%is denoted by $\lvert\mu\rvert$
%and  called the {\em total variation}\/ of $\mu$.

The space
 $C_{\mathsf{c}}(G)$  is made into a
locally convex space by the inductive limit topology, as induced by
the canonical embeddings
\begin{equation*}
  C^{}_K (G) \; \hookrightarrow \; C_{\mathsf{c}} (G)
  \; , \quad K\subset G \mbox{ compact}.
\end{equation*}
Here, $C_K (G)$ is the space of complex valued continuous functions
on $G$ with support in $K$, which is equipped with the usual
supremum norm.  In line with the Riesz-Markov theorem, for us a
measure $\mu$ on $G$ will then be   a linear functional on
$C_{\mathsf{c}}(G)$, which is continuous with respect to the
inductive topology on $C_{\mathsf{c}}(G)$, see \cite{BL,Ped} for
details.

The convolution of a measure $\mu$ with an $f\in C_{\mathsf{c}}(G)$
is defined as
$$ \mu \ast f : G\longrightarrow \CC, \;\: \mu \ast f (x) = \mu (f(x
- \cdot)).$$

 As is well known
(see e.g. \cite[Thm.~6.5.6]{Ped} together with its proof), every
measure $\mu$  gives rise to a unique measure  $\lvert\mu\rvert$
called the {\em total variation}\/ of $\mu$, satisfying
\begin{equation*}
 \lvert\mu\rvert(f) \; = \; \sup\, \{ \lvert\mu (g) \rvert :
   g \in C_{\mathsf{c}} (G,\RR) \mbox{ with } \lvert g\rvert\leq f\}
\end{equation*}
for every non-negative $f \in C_{\mathsf{c}} (G)$. The total
variation is a positive measure i.e. it maps non-negative functions
to non-negative values and allows for the usual integration theory.
Moreover, by \cite[Thm.~6.5.6]{Ped}, there exists  a measurable
function $u\!: G\longrightarrow \CC$ with $\lvert u(t)\rvert = 1$
for $\lvert\mu\rvert$-almost every $t\in G$ such that
\begin{equation*}
  \mu( f) \; = \; \int_G  f \, u \dd \lvert\mu\rvert
  \quad \mbox{ for all } f \in C_{\mathsf{c}} (G).
\end{equation*}
We can use this to define for any measure $\mu$ on $G$ and any
bounded measurable function $h$ on $G$ the measure $h \mu$ by
$$(h \mu) (f) := \int_G f\, h\, u \, \dd \lvert\mu\rvert$$
 for $f \in C_{\mathsf{c}} (G)$.
We can then also  define the  \textit{discrete part} of a measure
$\mu$ by
$$\mu_{\mathsf{pp}}   = \sum_{p\in P} u(p) \delta_p,$$
where, for $q\in G$, we define the measure  $\delta_q$ via $\delta_q
(f) = f (q)$ and  $P$ is the set of those $q\in G$ such that
$\lvert\mu\rvert (g ) \geq 1$ whenever $ g \in C_{\mathsf{c}} (G)$
is non-negative with $ g (q) =1$.

The measure $\mu$ on $G$  is called  \textit{translation bounded} if
for each compact set\/ $K$ we have
  \[
  \sup_{x \in G} \left| \mu \right| (x+K) \, < \, \infty.
  \]

As mentioned already, besides measures and functions certain subsets
of $G$ with additional properties  are the main object   in our
considerations. The corresponding pieces of notation are introduced
next.

A subset  of\/ $G$ is called\/ {\em uniformly discrete} if there
exists an open set $V$ in $G$ containing the neutral element of $G$
such that $(x + V) \cap (y + V) = \emptyset$ whenever $x$ and $y$
are two different elements of the subset. A subset of  $G$ is
called\/ {\em relatively dense} if there exists a compact $K$ such
that any translate of $K$ intersects the subset. A subset of $G$
which is both uniformly discrete and relatively dense is called {\em
Delone}.  A subset of $G$ is called \textit{weakly uniformly
discrete} if for any compact $K\subset G$ there is a $C$ such that
any translate of $K$ meets at most $C$ points of $\varGamma$.  One
can identify a weakly uniformly discrete set $\vL$ in $G$ with a
measure by considering its \textit{Dirac comb}
$$\delta_\vL:= \sum_{x\in \vL} \delta_x.$$
In this way, all considerations below dealing with measures
naturally descend to weakly uniformly discrete sets and in
particular to Delone sets.

\section{A study of positive and positive definite
measures}\label{section-study}
 We will be interested in measures and
functions with additional positivity properties. More specifically,
we will be interested in positive definite measures. We will present
a study of certain features. As a consequence we will derive three
main properties at end of this section.

\bigskip

\begin{definition} Let $G$ be an LCAG.
\begin{itemize}
   \item The measure $\mu$ on $G$  is \textit{positive definite} if for all $f \in  C_{\mathsf{c}}(G)$ we have
   $\mu  (f*\widetilde{f}) \, \geq \, 0$
\item The function $f : G\longrightarrow \CC$ is \textit{positive
definite} if for all $N\in \NN$ and $x_1,\ldots, x_N\in G$, the
matrix $(f (x_k - x_l))_{k,l=1,\ldots, N}$ is positive definite.
\end{itemize}
\end{definition}
It is well known (see e.g. \cite{BG}) that a measure $\mu$ is
positive definite if and only if  $\mu \ast (f\ast \widetilde{f})$
is positive definite for all $f\in C_{\mathsf{c}}(G)$. Also, we have
the following well known result (\cite{BG}).

\begin{proposition}[Krein's
inequality for functions]\label{Krein-inequality-for-functions} Let
$G$ be an LCAG.  Let $f$ be a positive definite function on $G$.
Then $f(0) \geq |f(x)|$ for all $x\in G$ and
$$|f (x+t) - f(x)|^2 \leq 2 f(0)[f(0) - \Re (f(t)) ]$$
for all $x,t\in G$ (where  $\Re$ denotes the real part).
\end{proposition}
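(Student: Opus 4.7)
The plan is to derive both inequalities directly from the definition of positive definiteness, applied to small point configurations of size $N=2$ and $N=3$.

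For the first inequality $f(0) \geq |f(x)|$, I would first take $N=1$ with $x_1 = 0$, which shows that $f(0) \geq 0$. Then, for $N=2$ with $x_1 = 0$ and $x_2 = x$, positive definiteness of the matrix
$$ \begin{pmatrix} f(0) & f(-x) \\ f(x) & f(0) \end{pmatrix} $$
forces it to be Hermitian (so $f(-x) = \overline{f(x)}$) and to have nonnegative determinant, which gives $f(0)^2 \geq |f(x)|^2$. Since $f(0) \geq 0$, the first inequality follows.

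For Krein's inequality itself, I would take $N = 3$ with the points $x_1 = 0$, $x_2 = -x$, $x_3 = -x-t$. Then $M_{jk} := f(x_j - x_k)$ is a positive semi-definite Hermitian matrix whose entries involve precisely $f(0)$, $f(\pm x)$, $f(\pm t)$ and $f(\pm(x+t))$. The trick is to test $c^* M c \geq 0$ against the vector $c = (a,1,-1)$ for an arbitrary $a \in \CC$. Expanding this, the block in the lower $2\times 2$ corner contributes $2f(0) - 2\Re f(t)$, while the interaction of $a$ with the other two entries produces the combination $f(x) - f(x+t)$. After simplification, the inequality becomes
$$ |a|^2 f(0) \, - \, 2\Re\bigl(\overline{a}\, (f(x+t)-f(x))\bigr) \, + \, 2\bigl(f(0) - \Re f(t)\bigr) \, \geq \, 0 $$
for every $a \in \CC$. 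Assuming $f(0) > 0$, optimising in $a$ (taking $a = (f(x+t)-f(x))/f(0)$, or equivalently completing the square) yields $|f(x+t)-f(x)|^2 \leq 2f(0)\bigl(f(0) - \Re f(t)\bigr)$. The case $f(0) = 0$ is immediate, since the first inequality then forces $f \equiv 0$, so both sides vanish.

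There is no genuine obstacle here; the only nontrivial step is guessing the test vector $(a,1,-1)$, and once this is in place the argument is a short calculation. The proof exclusively uses the defining $N=2$ and $N=3$ positive semi-definiteness conditions, so it works verbatim in any LCAG $G$.
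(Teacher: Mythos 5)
Your proof is correct. Note, however, that the paper does not prove this proposition at all: it is quoted as a well-known fact with a reference to \cite{BG} (it is the classical Krein inequality, found e.g.\ in Berg--Forst). So there is nothing in the paper to compare against; what you have supplied is the standard self-contained argument. The computation checks out: with $x_1=0$, $x_2=-x$, $x_3=-x-t$ and the test vector $(a,1,-1)$ one gets
\[
|a|^2 f(0) \,-\, 2\ts\Re\bigl(\overline{a}\,(f(x+t)-f(x))\bigr) \,+\, 2\bigl(f(0)-\Re f(t)\bigr) \,\geq\, 0,
\]
and optimising over $a$ gives exactly the stated bound, with the degenerate case $f(0)=0$ handled by the first inequality. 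The only point worth making explicit is the convention you rely on in the $N=2$ step: the positive definiteness of $(f(x_k-x_l))_{k,l}$ must be understood as $\sum_{k,l} c_k\overline{c_l}\,f(x_k-x_l)\ge 0$ for all \emph{complex} $c$, since it is this (real-valuedness of the Hermitian form for all complex vectors) that forces $f(-x)=\overline{f(x)}$ and makes the determinant argument legitimate; with that standard reading your proof is complete and works in any LCAG, indeed in any group.
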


We start by showing that the restriction to the pure point part
preserves positivity and positive definiteness.

\begin{lemma}\label{L-help}
Let $G$ be an LCAG. Let\/ $\mu$ be measure on $G$ and
$\mu_{\mathsf{pp}}$ its discrete part.
\begin{itemize}
  \item[(a)] If\/ $\mu$ is positive then\/ $\mu_{\mathsf{pp}}$ is positive.
  \item[(b)] If\/ $\mu$ is positive definite then\/ $\mu_{\mathsf{pp}}$ is positive definite.
  \item[(c)] If\/ $\mu$ is positive and positive definite then\/ $\mu_{\mathsf{pp}}$ is positive and positive definite.
\end{itemize}
\end{lemma}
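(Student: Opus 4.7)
Part (a) is immediate from the polar decomposition $\mu = u\cdot |\mu|$: when $\mu$ is positive we may take $u\equiv 1$, so every coefficient in $\mu_{\mathsf{pp}}=\sum_{p\in P}u(p)\delta_p$ is nonnegative. Part (c) is simply the conjunction of (a) and (b), so the content of the lemma lies in (b).

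For (b) the plan is to recover the atom values $\mu(\{x\})$ as pointwise limits of positive definite continuous functions, and then transport the resulting positive semidefinite matrix kernel back to the measure level. Choose a net $g_n \in C_{\mathsf{c}}(G)$ of nonnegative functions with supports shrinking to $\{0\}$ and $\|g_n\|_2 = 1$, and set $\phi_n := g_n \ast \widetilde{g_n}$. Each $\phi_n$ is a continuous, compactly supported, positive definite function with $\phi_n(0)=1$ and $|\phi_n|\leq 1$ by Proposition \ref{Krein-inequality-for-functions}. Since $\mu$ is positive definite, the equivalent characterisation recalled just before this lemma gives that $F_n := \mu \ast \phi_n$ is a positive definite continuous function on $G$, so for every finite tuple $x_1,\ldots,x_N \in G$ the matrix $\bigl(F_n(x_k-x_l)\bigr)_{k,l}$ is positive semidefinite. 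Combining $|\phi_n|\leq 1$, the local finiteness of $|\mu|$, and the shrinking of the supports of $\phi_n$, dominated convergence yields
\[
 F_n(x) \;=\; \int_G \phi_n(x - y)\dd \mu(y) \;\longrightarrow\; \mu(\{x\})
\]
for every $x \in G$. Passing to the limit inside the matrices shows that $\bigl(\mu(\{x_k-x_l\})\bigr)_{k,l}$ is positive semidefinite for every finite tuple.

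To transfer this matrix condition to positive definiteness of the measure $\mu_{\mathsf{pp}}$, I fix $f \in C_{\mathsf{c}}(G)$ and approximate it uniformly, with supports kept in a fixed compact, by step-function sums $f_n := \sum_i \alpha_i^{(n)} g_n(\cdot - y_i^{(n)})$ built from the same kernels. Expanding $f_n \ast \widetilde{f_n}$ as a double sum of translates of $\phi_n$ and applying the same concentration argument to $\mu_{\mathsf{pp}}$ (which carries the same atoms with the same values as $\mu$), one obtains
\[
 \mu_{\mathsf{pp}}(f_n \ast \widetilde{f_n}) \;\longrightarrow\; \sum_{i,j} \alpha_i^{(n)}\overline{\alpha_j^{(n)}}\, \mu\bigl(\{y_i^{(n)}-y_j^{(n)}\}\bigr) \;\geq\; 0
\]
by the matrix property just established. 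Since $\mu_{\mathsf{pp}}$ is a locally finite discrete measure and the supports of $f_n \ast \widetilde{f_n}$ remain in a fixed compact, the left-hand side also converges to $\mu_{\mathsf{pp}}(f\ast \widetilde{f})$, which is therefore nonnegative. The main obstacle is the careful bookkeeping of the two nested limits: the supports of $\phi_n$ must shrink faster than the mesh of the discretisation of $f$, so that $\phi_n$ effectively isolates each atom; once the scalings are chosen correctly, everything reduces to Krein's inequality and dominated convergence.
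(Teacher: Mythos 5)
Your argument is correct in substance but takes a genuinely different route from the paper, which disposes of part (b) in one line by citing \cite[Thm.~10.2]{ARMA}; you instead reconstruct that classical fact from scratch. Your reconstruction is the standard one and it works: smoothing against $\phi_n=g_n\ast\widetilde{g_n}$ produces continuous positive definite functions $F_n=\mu\ast\phi_n$ (by the characterisation recalled just before the lemma), the normalisation $\phi_n(0)=\|g_n\|_2^2=1$ together with the bound $|\phi_n|\le\phi_n(0)$ from Proposition~\ref{Krein-inequality-for-functions} and dominated convergence give $F_n(x)\to\mu(\{x\})$ for nested shrinking supports, and closedness of the cone of positive semidefinite matrices shows that the support function $x\mapsto\mu(\{x\})$ of $\mu_{\mathsf{pp}}$ is a positive definite function on $G$. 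What the paper's citation buys is brevity; what your argument buys is self-containedness and an explanation of why the statement is true. Two remarks on your final step, which is the only place where details are genuinely owed. First, once the support function of the discrete measure $\mu_{\mathsf{pp}}$ is known to be positive definite you are exactly in case $(iii)$ of Proposition~\ref{P1}, whose proof does not use Lemma~\ref{L-help}; invoking that equivalence would let you delete the entire last paragraph, whose two-scale approximation (kernels shrinking inside a refining discretisation of $f$) is only sketched rather than carried out. Second, as written your last display is internally inconsistent: the right-hand side of the limit still carries the index $n$ of the limit being taken on the left, so the two limits (kernel width to zero, mesh to zero) must be separated explicitly or taken along a suitable diagonal. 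Neither point is a gap in the idea, but both should be repaired if you keep the hands-on version of the transfer step.
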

\noindent{\bf Proof}: $(a)$ is obvious. $(b)$ follows from
\cite[Thm.~10.2]{ARMA}.  $(c)$ is an immediate consequence of
 $(a)$ and $(b)$. \qed

\bigskip

We are now heading towards a Krein inequality for measures. To
establish it we will need some preparation. We let $G_{\mathsf{d}}$
be the group   $G$ equipped with the discrete topology.  Consider
now a discrete measure $\mu$ on $G$. Then, $\mu$ can be identified
with a measure on $G_{\mathsf{d}}$.

Also, $\mu$ defines a function  on $G$ via $f (x) := \mu(\{x\})$. We
call it the \textit{support function} of $\mu$. In the next
Proposition~\ref{P1}  we show that the positive definiteness of
$\mu$ as measure on $G$ respectively $G_{\mathsf{d}}$ and of $f$ as
function on $G$ respectively $G_{\mathsf{d}}$ are equivalent. This
will allow us to translate properties of positive definite functions
to discrete measures.

\begin{proposition}\label{P1}
Let $G$ be an LCAG. Let\/ $\mu$ be a discrete measure on $G$ and let
$f : G \to \CC$ be its support  function
\[
f(x) \, := \, \mu( \{ x \}) \ts .
\]
Then the following assertions are equivalent:
\begin{itemize}
  \item[$(i)$] $\mu$ is a positive definite measure on $G$.
  \item[$(ii)$] $\mu$ is a positive definite measure on $G_{\mathsf{d}}$.
  \item[$(iii)$] $f$ is a positive definite function on $G$.
  \item[$(iv)$] $f$ is a positive definite function on $G_{\mathsf{d}}$.
\end{itemize}
\end{proposition}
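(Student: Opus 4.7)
The plan is to show (iii)$\Leftrightarrow$(iv) and (ii)$\Leftrightarrow$(iv) by direct manipulation, and then to handle the substantive equivalence (i)$\Leftrightarrow$(iv).

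Both ``easy'' equivalences are an unfolding of definitions. The equivalence (iii)$\Leftrightarrow$(iv) is immediate because positive definiteness of a function is defined pointwise through positive semidefiniteness of the matrices $(f(x_k-x_\ell))$ and involves no topology on $G$. For (ii)$\Leftrightarrow$(iv), I would compute on $G_{\mathsf{d}}$, where Haar measure is counting measure: for $g=\sum_{i=1}^{N}c_i\delta_{x_i}\in C_{\mathsf{c}}(G_{\mathsf{d}})$ one has $g*\widetilde g=\sum_{i,j}c_i\overline{c_j}\delta_{x_i-x_j}$, so that $\mu(g*\widetilde g)=\sum_{i,j}c_i\overline{c_j}f(x_i-x_j)$, and non-negativity of this expression for every finite family is precisely condition~(iv).

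For (i)$\Rightarrow$(iv), the idea is to realise the quadratic form $\sum_{i,j}c_i\overline{c_j}f(x_i-x_j)$, up to a positive constant, as $\mu(h*\widetilde h)$ for a cleverly chosen $h\in C_{\mathsf{c}}(G)$. Given finite data $(x_i,c_i)_{i=1}^{N}$, local finiteness of $P$ lets me pick a symmetric open neighbourhood $V$ of $0$ small enough that $P\cap((x_i-x_j)+V-V)\subseteq\{x_i-x_j\}$ for all pairs $(i,j)$. Choosing a non-negative $\varphi\in C_{\mathsf{c}}(G)$ supported in $V$ and setting $h:=\sum_i c_i\varphi(\cdot-x_i)$, the expansion $h*\widetilde h=\sum_{i,j}c_i\overline{c_j}(\varphi*\widetilde\varphi)(\cdot-(x_i-x_j))$, combined with the size condition on $V$, collapses each inner $p$-sum to a single surviving term, giving
\[
\mu(h*\widetilde h) \;=\; (\varphi*\widetilde\varphi)(0)\sum_{i,j}c_i\overline{c_j}f(x_i-x_j).
\]
Since the left-hand side is $\ge 0$ by (i) and $(\varphi*\widetilde\varphi)(0)=\int|\varphi|^{2}>0$, the quadratic form is non-negative, yielding (iv).

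The main obstacle is the reverse implication (iv)$\Rightarrow$(i), which I would handle by a discretisation argument. Given $h\in C_{\mathsf{c}}(G)$, partition $\supp h$ into finitely many disjoint pieces $V_j=v_j+V$ for a common small symmetric $V$ (e.g.\ a cube partition in Euclidean space), fine enough that $(v_j-v_k+V-V)\cap P\subseteq\{v_j-v_k\}$ for every pair in the finite grid. Setting $h_V:=\sum_j h(v_j)\mathbf{1}_{V_j}$, each function $\mathbf{1}_{V_j}*\widetilde{\mathbf{1}_{V_k}}$ is continuous and compactly supported in $v_j-v_k+V-V$, so the size condition yields $\mu(\mathbf{1}_{V_j}*\widetilde{\mathbf{1}_{V_k}})=|V|f(v_j-v_k)$; linearity then gives
\[
\mu(h_V*\widetilde{h_V}) \;=\; |V|\sum_{j,k}h(v_j)\overline{h(v_k)}f(v_j-v_k)\;\ge\;0
\]
by hypothesis (iv). Refining the partition, uniform continuity of $h$ forces $h_V\to h$ uniformly with supports in a fixed compact set, whence $h_V*\widetilde{h_V}\to h*\widetilde h$ uniformly and continuity of $\mu$ on $C_{\mathsf{c}}(G)$ yields $\mu(h*\widetilde h)=\lim\mu(h_V*\widetilde{h_V})\ge 0$, i.e.\ (i). The delicate point is synchronising the discretisation with the locally finite structure of $P$ so that the matrix of values $\mu(\mathbf{1}_{V_j}*\widetilde{\mathbf{1}_{V_k}})$ collapses cleanly to $|V|\,f(v_j-v_k)$; once this is arranged, hypothesis (iv) and the limit do the rest.
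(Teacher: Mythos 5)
Your handling of the two ``soft'' equivalences is fine: $(iii)\Leftrightarrow(iv)$ is indeed definitional, and your direct computation $\mu(g\ast\widetilde{g})=\sum_{i,j}c_i\overline{c_j}\,f(x_i-x_j)$ for finitely supported $g$ on $G_{\mathsf{d}}$ is a clean, self-contained substitute for the paper's route to $(ii)\Leftrightarrow(iv)$ (the paper instead multiplies $f$ by the Haar measure of $G_{\mathsf{d}}$ and convolves with a point mass, citing Argabright--de Lamadrid). The substantive content of the proposition, however, is the passage between the topologies of $G$ and $G_{\mathsf{d}}$, i.e. $(i)\Leftrightarrow(ii)$; the paper does not prove this but quotes it as \cite[Thm.~10.1]{ARMA}, and it is exactly here that your argument has a genuine gap.

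Both halves of your $(i)\Leftrightarrow(iv)$ rest on choosing a neighbourhood $V$ of $0$ with $P\cap\bigl((x_i-x_j)+V-V\bigr)\subseteq\{x_i-x_j\}$, which presupposes that the atom set $P$ is locally finite. A discrete (pure point) Radon measure need not have locally finite support: with $\nu=\sum_n 2^{-n}\delta_{1/n}$ on $\RR$, the measure $\mu=\nu\ast\widetilde{\nu}$ is a finite, positive, positive definite discrete measure whose atoms accumulate at $0$ and at every $1/n-1/m$, and one can arrange $P$ to be dense in an interval. This is not a marginal case, since the proposition is applied in the paper to $\mu_{\mathsf{pp}}$ of an arbitrary positive definite measure. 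For $(i)\Rightarrow(iv)$ the defect is repairable: divide by $(\varphi\ast\widetilde{\varphi})(0)=\|\varphi\|_2^2$, observe that the spurious contributions are bounded by $\|\varphi\|_2^2\,\lvert\mu\rvert\bigl((x_i-x_j+\overline{V-V})\setminus\{x_i-x_j\}\bigr)$, and let $V$ shrink; continuity from above of $\lvert\mu\rvert$ kills the error, so the quadratic form is a limit of nonnegative quantities. For $(iv)\Rightarrow(i)$ no such repair is available along your lines: the identity $\mu(\mathbf{1}_{V_j}\ast\widetilde{\mathbf{1}_{V_k}})=\lvert V\rvert\, f(v_j-v_k)$ is simply false when atoms accumulate near $v_j-v_k$; the correction per pair is only $O\bigl(\lvert V\rvert\cdot\lvert\mu\rvert((v_j-v_k+\overline{V-V})\setminus\{v_j-v_k\})\bigr)$, and after summing over the roughly $\lvert V\rvert^{-2}$ pairs the total error is of order $\lvert V\rvert^{-1}\sup_x\lvert\mu\rvert\bigl((x+\overline{V-V})\setminus\{x\}\bigr)$, which for measures built from $\sum_n n^{-2}\delta_{1/n}$ is comparable to the main term and does not vanish as the partition is refined. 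That direction is the real content of the proposition and needs a different idea; it is precisely the part the paper delegates to the almost-periodicity machinery of \cite{ARMA}.
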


\noindent{\bf Proof}: The equivalence $(i) \Longleftrightarrow (ii)$
is \cite[Thm.~10.1]{ARMA}.  The  equivalence $(iii)
\Longleftrightarrow (iv)$ follows immediately from the definition of
positive definiteness (as the underlying topology is not relevant
for the definition).  To complete the proof we show that $(ii)$ and
$(iv)$ are equivalent.

We first prove the $(iv) \Longrightarrow (ii)$: As $f$ is a positive
definite function on $G_{\mathsf{d}}$ and the Haar measure
$\theta_{G_{\mathsf{d}}}$ is a positive definite measure on
$G_{\mathsf{d}}$, it follows from \cite[Cor.~4.3]{ARMA1} that $f
\theta_{G_{\mathsf{d}}}$ is a positive definite measure on
$G_{\mathsf{d}}$. As $\mu = f \theta_{G_{\mathsf{d}}}$, this proves
$(ii)$.

We next prove $(ii) \Longrightarrow (iv)$: Since $\mu$ is a positive
definite measure on $G_{\mathsf{d}}$, it follows that $\mu*g
*\widetilde{g}$ is a positive definite function for all $g \in
C_{\mathsf{c}}(G_{\mathsf{d}})$.  Let us observe that $g \in
C_{\mathsf{c}}(G_{\mathsf{d}})$ if and only if $g$ has a finite
support. Therefore
\[
    g(x)\,=\, \begin{cases}
              1, & \text{if $x=0$,} \\
              0, & \text{otherwise.}
               \end{cases} \, \in C_{\mathsf{c}}(G_{\mathsf{d}}) \ts .
\]
Our claim now follows from the observation that with this choice of
$g$ we have $f=\mu*g *\widetilde{g}$. \qed

\bigskip

As a consequence of Proposition~\ref{P1} we obtain the following
version of Krein's inequality for positive definite discrete
measures.

\begin{corollary}({\bf Krein's inequality for measures}) Let\/ $\mu$ be a positive definite measure on\/ $G$. Then all\/ $x, t \in G$, we have
\[
   \bigl| \mu(\{ x+t\}) \, - \, \mu(\{ x \}) \bigr|^2
   \,\leq\, 2\ts \mu(\{ 0 \})\ts
   \bigl[ \mu(\{ 0 \}) \, - \, \Re \bigl(\mu(\{ t\})\bigr)\bigr] \ts .
\]
\end{corollary}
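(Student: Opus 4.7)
The plan is to reduce this Krein-type inequality for measures directly to the already-established Krein inequality for functions (Proposition \ref{Krein-inequality-for-functions}), by passing through the support function of the discrete part of $\mu$. All the necessary machinery for this reduction is already in place in Lemma \ref{L-help} and Proposition \ref{P1}.

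First, since $\mu$ is positive definite, Lemma \ref{L-help}(b) guarantees that the discrete part $\mu_{\mathsf{pp}}$ is also a positive definite measure. Being discrete, $\mu_{\mathsf{pp}}$ falls in the scope of Proposition \ref{P1}, so its support function $f : G \to \CC$ defined by $f(y) := \mu_{\mathsf{pp}}(\{y\})$ is a positive definite function on $G$. Applying Proposition \ref{Krein-inequality-for-functions} to $f$ at the pair $(x,t)$ yields
$$|f(x+t) - f(x)|^2 \,\leq\, 2\ts f(0)\ts\bigl[f(0) - \Re(f(t))\bigr].$$

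To finish, I would translate this inequality back into a statement about $\mu$ using the identity $\mu(\{y\}) = \mu_{\mathsf{pp}}(\{y\}) = f(y)$, which holds for every $y \in G$ because the (non-atomic) part of $\mu$ assigns zero mass to any singleton. Substituting $f(y) = \mu(\{y\})$ for $y \in \{0, t, x, x+t\}$ immediately gives the claimed bound. There is no real obstacle in this argument; it is essentially a one-line consequence of Lemma \ref{L-help}(b), Proposition \ref{P1}, and Proposition \ref{Krein-inequality-for-functions}, with the only point requiring a moment's thought being the elementary identification of $\mu(\{y\})$ with the support function $f(y)$ of the discrete part.
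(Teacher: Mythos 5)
Your proof is correct and follows essentially the same route as the paper: apply Lemma \ref{L-help}(b) to pass to the positive definite discrete part, use Proposition \ref{P1} to get positive definiteness of its support function, and then invoke Proposition \ref{Krein-inequality-for-functions}. Your explicit remark that $\mu(\{y\})=\mu_{\mathsf{pp}}(\{y\})$ for every singleton is a point the paper leaves implicit, but it is the right justification.
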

\begin{proof} By Lemma \ref{L-help} the measure  $\mu_{\mathsf{pp}}$
is positive definite. By the previous proposition  its support
function $f$ is positive definite. Now, the corollary follows
directly from Proposition \ref{Krein-inequality-for-functions}
applied to $f$.
\end{proof}

We are now going to derive three main consequences of the previous
considerations.

\medskip

Our next result shows  that if $\mu$ is a positive and positive
definite measure on $G$, the set of points of measure close to $\mu
( \{ 0\})\neq 0$ has some sparseness property. This will be a main
ingredient in our study of Meyer sets later on.

\begin{lemma}[Sparseness Lemma]\label{P2}
Let $G$ be an LCAG. Let $\mu$ be a positive and positive definite
translation bounded measure and chose  $ a > (\sqrt{3}-1) \mu(\{
0\})$ and let
\[
I \,:= \, \bigl\{ x \in G \, | \, \mu(\{ x \}) \, \geq \, a \bigr\}
\,.
\]
Then $I-I$ is weakly uniformly discrete.
\end{lemma}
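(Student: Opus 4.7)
The plan is to find a uniform lower bound $b > 0$ with $\mu(\{t\}) \geq b$ for every $t \in I - I$. Once this is in hand, weak uniform discreteness of $I - I$ is immediate from translation boundedness: for any compact $K$, positivity gives
\[
b \cdot \bigl|(I-I) \cap (K + x)\bigr| \,\leq\, \sum_{t \in (I-I) \cap (K+x)} \mu(\{t\}) \,\leq\, \mu(K + x),
\]
and the right-hand side is bounded uniformly in $x \in G$ by translation boundedness. So the real content is producing the lower bound $b$.

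To produce it, I would fix $t \in I - I$, write $t = y - x$ with $x, y \in I$, and apply Krein's inequality for measures (Corollary above) with the substitution $(x_{\mathrm{Krein}}, t_{\mathrm{Krein}}) = (y, -x)$, so that the shifted point $x_{\mathrm{Krein}} + t_{\mathrm{Krein}} = y - x$ is precisely $t$. Writing $M := \mu(\{0\})$, this yields
\[
\bigl|\mu(\{t\}) - \mu(\{y\})\bigr|^2 \,\leq\, 2 M \bigl[M - \Re \mu(\{-x\})\bigr].
\]
By Lemma \ref{L-help} and Proposition \ref{P1}, the positivity and positive definiteness of $\mu$ ensure that its support function is a real-valued positive definite function, hence in particular symmetric, so $\mu(\{-x\}) = \mu(\{x\}) \geq a$. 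The triangle inequality then gives
\[
\mu(\{t\}) \,\geq\, \mu(\{y\}) - \sqrt{2M(M-a)} \,\geq\, a - \sqrt{2M(M-a)}.
\]

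It remains to check that the hypothesis $a > (\sqrt{3} - 1) M$ is exactly what makes the constant $b := a - \sqrt{2M(M - a)}$ positive. Squaring turns $b > 0$ into the quadratic inequality $a^2 + 2Ma - 2M^2 > 0$, whose positive root (viewed as a quadratic in $a$) is indeed $(\sqrt{3} - 1)M$.

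The hard part is identifying the right substitution in Krein's inequality. The naive choice $(x_{\mathrm{Krein}}, t_{\mathrm{Krein}}) = (x, t)$ yields $|\mu(\{y\}) - \mu(\{x\})|^2 \leq 2M[M - \Re \mu(\{t\})]$, which is an \emph{upper} bound on $\mu(\{t\})$ and goes the wrong way. The less obvious substitution $(y, -x)$ swaps roles so that $t$ appears on the left-hand side and can inherit largeness from $\mu(\{y\}) \geq a$ via the triangle inequality, while the right-hand side is simultaneously controlled using $\mu(\{-x\}) = \mu(\{x\}) \geq a$. Once this is spotted, the rest is routine.
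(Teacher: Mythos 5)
Your proof is correct and follows essentially the same route as the paper: Krein's inequality for measures yields the lower bound $b = a - \sqrt{2\mu(\{0\})(\mu(\{0\})-a)}$ on $\mu(\{t\})$ for $t \in I-I$, the condition $a > (\sqrt{3}-1)\mu(\{0\})$ is exactly $b>0$, and translation boundedness plus positivity gives the counting bound. The only (cosmetic) difference is the substitution: the paper applies Krein at the base point $x-y$ with shift $y$, which lands directly on $|\mu(\{x\})-\mu(\{x-y\})|^2 \leq 2\mu(\{0\})[\mu(\{0\})-\Re\mu(\{y\})]$ and avoids needing the symmetry $\mu(\{-x\})=\mu(\{x\})$ that your choice $(y,-x)$ requires (though that symmetry is indeed valid here, since the support function is real-valued and positive definite).
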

\begin{remark}  Let us note that $\mu(\{0\})$ is  greater than $0$
whenever $\mu$ is not the zero measure (as can easily be inferred
from Krein's inequality).
\end{remark}

\smallskip

\noindent{\bf Proof}: We prove first that there exists some $b
>0$ depending only on $a$ and $\mu(\{ 0 \})$)  such that for all $x,y\in I$
we have $\mu(\{ x-y \}) >b$. By Krein's Inequality we have
\[
   \bigl| \mu(\{ x\}) \, - \, \mu(\{ x-y \}) \bigr|^2
   \,\leq\, 2\ts \mu(\{ 0 \})\ts
   \bigl[ \mu(\{ 0 \}) \, - \, \Re  \bigl(\mu(\{ y\})\bigr)\bigr] \ts .
\]

Therefore, as $\mu$ is positive (and hence real), we have
\[
\begin{split}
   \mu(\{ x-y \}) \, &\geq \, \mu(\{x \}) \, - \,
   \sqrt{ 2\ts \mu(\{ 0 \})\ts
   \bigl[ \mu(\{ 0 \}) \, - \, \Re  \bigl(\mu(\{ y\})\bigr)\bigr]} \\
   &\geq \, a \,-\,
   \sqrt{ 2\ts \mu(\{ 0 \})\ts
   \bigl[ \mu(\{ 0 \}) \, - \, a \bigr]} \ts .
   \end{split}
\]

Let $b=a-\sqrt{ 2\ts \mu(\{ 0 \})\ts \bigl[ \mu(\{ 0 \}) - a
\bigr]}$. A short computation shows that $b>0$ is equivalent to the
condition on $a$ in the statement of the proposition. Indeed,

\begin{eqnarray*}
b > 0 & \Longleftrightarrow &   a  > \sqrt{ 2\, \mu(\{ 0 \}) \ts
    \bigl[ \mu(\{ 0 \}) \, - \,
    a \bigr]}\\
 &\Longleftrightarrow  &
a^2   >    2\, \mu(\{ 0 \})
    \bigl[ \mu(\{ 0 \}) \, - \,
    a \bigr]\\
&\Longleftrightarrow & a^2 \, + \,  2\, \mu(\{ 0 \}) a \, + \,
\bigl( \mu(\{ 0 \}) \bigr)^2 \ts
    > 3 \bigl( \mu(\{ 0 \}) \bigr)^2 \\
    &\Longleftrightarrow &  \bigl[a \bigr.\, +\, \bigl. \mu(\{ 0 \}) \bigr]^2 \ts
    > 3 \bigl( \mu(\{ 0 \}) \bigr)^2\\
     &\Longleftrightarrow &
       a \, >\,
    \mu(\{ 0 \})\bigl( \sqrt{3} \, - \, 1 \bigr).
    \end{eqnarray*}
Now, let
\[
J:= \bigl\{ x \in G \, | \, \mu(\{ x \}) \, \geq \, b \bigr\} \,.
\]

We proved above that $I-I \subset J$. Thus, to complete the proof it
suffices to  show that $J$ is weakly uniformly discrete.

Let $K \subset G$ be any compact set. Since $\mu$ is translation
bounded, we have by definition
\[
C \,:= \, \sup_{t \in G} \mu(t+K)  \, < \, \infty \,.
\]
We show that for all $t \in G$ we have
\[
\sharp \bigl( (t+K) \cap J \bigr) \, \leq \, \frac{ C  } {b } \,,
\]
(which clearly  proves that $J$ is weakly uniformly discrete).
Indeed, for all $t \in G$ we have
\[
C  \geq \mu(t+K) \, \geq \, \sum_{x \in \bigl( (t+K) \cap J \bigr)}
\mu(\{x \}) \, \geq \, b \sharp \bigl( (t+K) \cap J \bigr) \,.
\]
This shows that $J$ is weakly uniformly discrete and the proof is
finished. \qed

\bigskip

Our next two results show that a relatively dense  set can only give
rise to a positive definite Dirac comb if it is a lattice. This ties
in well with various recent strings of research (see remark below).

\begin{lemma}[Rigidity Lemma] \label{L1}
Let $G$ be an LCAG.  Let $\vL \subset G$ be
 weakly uniformly discrete.  Then $\delta_{\vL}$ is
positive definite if and only if $\vL$ is discrete subgroup of $G$.
\end{lemma}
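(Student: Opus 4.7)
The plan is to treat the two directions separately, and in both cases use Proposition \ref{P1} to convert between positive definiteness of the measure $\delta_\vL$ and positive definiteness of its support function $1_\vL$; the nontrivial content then lies in Krein's inequality.

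For the easy direction ($\Leftarrow$), suppose $\vL$ is a subgroup of $G$. For any finitely many $x_1,\dots,x_N\in G$, group the indices according to cosets of $\vL$; then the matrix $\bigl(1_\vL(x_k-x_l)\bigr)_{k,l}$ is, after a permutation, block-diagonal with each block being an all-ones matrix, hence positive semidefinite. So $1_\vL$ is a positive definite function on $G$. Since $\vL$ is weakly uniformly discrete, $\delta_\vL$ is a locally finite (in fact translation bounded) measure on $G$ with support function $1_\vL$, so the implication $(iii)\Rightarrow(i)$ of Proposition \ref{P1} yields that $\delta_\vL$ is a positive definite measure.

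For the converse ($\Rightarrow$), assume $\delta_\vL$ is positive definite. The case $\vL=\emptyset$ is trivial (and excluded by the usual convention that a subgroup is nonempty), so assume $\vL\neq\emptyset$. By Proposition \ref{P1}, the support function $f:=1_\vL$ is a positive definite function on $G$. Since $f$ takes values in $\{0,1\}$, the Krein bound $f(0)\geq|f(y)|$ from Proposition \ref{Krein-inequality-for-functions} applied to any $y\in\vL$ forces $f(0)=1$, i.e.\ $0\in\vL$. Now I exploit the rigidity of Krein's second inequality: for any $t\in\vL$ we have $f(t)=f(0)=1$, so
\[
|f(x+t)-f(x)|^2 \,\leq\, 2\,f(0)\,\bigl[f(0)-\Re(f(t))\bigr]\,=\,0
\]
for every $x\in G$. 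Hence $f(x+t)=f(x)$ for all $x$, which means $\vL+t=\vL$ for every $t\in\vL$. Taking $x=-t$ gives $-t\in\vL$, and taking $x\in\vL$ gives $x+t\in\vL$, so $\vL$ is closed under addition and inversion and contains $0$ — it is a subgroup of $G$.

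To finish, I need discreteness of this subgroup. Since $\vL$ is weakly uniformly discrete, any relatively compact neighborhood of $0$ meets $\vL$ in only finitely many points, and the Hausdorff property of $G$ then lets me shrink it to a neighborhood of $0$ meeting $\vL$ only at $0$; translation-invariance of the group topology propagates this to every point of $\vL$. The main obstacle, and the conceptual heart of the proof, is the transfer in Proposition \ref{P1}: once one knows that positive definiteness of $\delta_\vL$ as a measure is equivalent to positive definiteness of the $\{0,1\}$-valued function $1_\vL$, the rigid equality case of Krein's inequality essentially hands over the group law for free.
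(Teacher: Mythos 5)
Your proof is correct and follows essentially the same route as the paper: both directions hinge on Proposition \ref{P1} to pass between $\delta_\vL$ and the indicator function $1_\vL$, and the forward implication exploits the equality case of Krein's inequality (since $f(t)=f(0)=1$ for $t\in\vL$ forces $f(x+t)=f(x)$) to obtain the group law, with weak uniform discreteness then upgrading to discreteness of the subgroup. The only differences are cosmetic: you spell out the ``obvious'' converse via the block-diagonal coset argument, and you handle the nonemptiness of $\vL$ explicitly where the paper (slightly inaccurately) invokes relative density.
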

\noindent{\bf Proof}: The implication '$\Longleftarrow$' is obvious.
It remains to prove the implication '$\Longrightarrow$'.  As
$\delta_{\vL}$ is positive definite, it follows from
Proposition~\ref{P1} that the function
\[
    f(x)\,=\, \begin{cases}
              1, & \text{if $x \in \vL$,} \\
              0, & \text{otherwise.}
               \end{cases}  \ts ,
\]
is positive definite. As $f \not\equiv 0$ (as $\vL$ is relatively
dense), it follows that $f(0) \neq 0$, and hence $f(0)=1$. Let now
$x, y \in \vL$ be arbitrary. Then, by Krein's inequality we have:

\[
   \bigl| f(x) \, - \,f(x-y) \bigr|^2
   \,\leq\, 2\ts f(0) \ts
   \bigl[f(0) \, - \, \Re  \bigl(f(y)\bigr)\bigr] =2 [1-1]=0 \ts .
\]

Therefore $f(x-y)=f(x)$. As $f(x)=1$ it follows that $f(x-y)=1$ and
hence $x-y \in \vL$. This shows that
\[
\vL-\vL \subset \vL \ts,
\]
and thus $\vL$ is a subgroup of $G$. As $\vL$ is weakly uniformly
discrete it follows that it must even be uniformly discrete and it
follows  that $\vL$ is a discrete subgroup  in $G$. \qed

\begin{corollary}
Let $\vL \subset G$ be a Delone set. Then $\delta_{\vL}$ is positive
definite if and only if $\vL$ is a lattice.
\end{corollary}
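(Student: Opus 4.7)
The proof should be essentially immediate from the Rigidity Lemma (Lemma \ref{L1}), so my plan is to argue that the corollary is just its specialization to the Delone setting.

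For the forward implication, I would first note that any Delone set $\vL$ is in particular uniformly discrete, hence weakly uniformly discrete, so the Rigidity Lemma applies. Assuming $\delta_{\vL}$ is positive definite, it yields that $\vL$ is a discrete subgroup of $G$. Since $\vL$ is also relatively dense by the Delone assumption, it is a relatively dense discrete subgroup of $G$, which is precisely the definition of a lattice (a discrete cocompact subgroup). This concludes the direction.

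For the reverse implication, I would observe that if $\vL$ is a lattice, then $\delta_{\vL}$ is (a positive multiple of) the Haar measure of the closed subgroup $\vL \subset G$. Positivity is immediate from the expression $\delta_\vL = \sum_{x\in\vL}\delta_x$. Positive definiteness follows from the standard fact that Haar measure on any LCAG (here $\vL$ with the discrete topology, viewed as a subgroup of $G$) is positive definite; alternatively, for any $f \in C_{\mathsf{c}}(G)$ one may compute directly
\[
\delta_\vL(f * \widetilde{f}) \, = \, \sum_{x \in \vL} (f * \widetilde{f})(x),
\]
and recognize this sum (after applying Poisson summation, or by a direct translation invariance argument on $\vL$) as a non-negative quantity.

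No step presents a real obstacle: the Rigidity Lemma does all the conceptual work, and the only content here is the observation that ``discrete subgroup $+$ relatively dense $=$ lattice''. The mildest subtlety to handle is ensuring the terminology matches the paper's convention for \emph{lattice} (uniform/cocompact discrete subgroup), but this is immediate from the relative denseness condition built into the Delone hypothesis.
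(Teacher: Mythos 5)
Your proposal is correct and follows the paper's own route exactly: the forward direction applies the Rigidity Lemma (Delone implies weakly uniformly discrete) and then observes that a relatively dense, uniformly discrete subgroup is a lattice, while the converse is the ``obvious'' direction already contained in that lemma. The extra detail you give for the converse (positive definiteness of the Haar measure of the discrete subgroup $\vL$) is a harmless elaboration of what the paper leaves implicit.
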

\begin{proof} By the previous lemma, $\vL$ is a discrete subgroup.
By assumption it is furthermore relatively dense and uniformly
discrete. Thus, it is a lattice.
\end{proof}

\begin{remark} The result can be seen in the context
 of a famous theorem of  Cordoba \cite{CORD} and a well-known
 question of Lagarias \cite{LAG}. The theorem of Cordoba
  says that if $\vL$ is a Delone set, and
$\delta_{\vL}$ is Fourier transformable with discrete Fourier
transform, then $\vL$ is crystallographic (i.e. a finite union of
translates of the same lattice). The question of Lagarias asks
whether every Delone set $\vL$ with strongly almost periodic Dirac
comb $\delta_\vL$  is actually crystallographic. Note that if
$\delta_{\vL}$ is Fourier transformable with discrete Fourier
transform, then $\delta_{\vL}$ is a strong almost periodic measure
\cite{ARMA}. Recently a positive answer to Lagarias question was
given under the additional hypothesis of finite local complexity
independently in \cite{FAV} and \cite{KL}. Moreover, in \cite{KL} it
is shown that the answer is in general negative without the
assumption of finite local complexity. In the context of Meyer sets
corresponding results were already obtained in \cite{NS2}.
\end{remark}

Another simple consequence of Proposition~\ref{P1} is the fact that
given a discrete positive definite measure, its restriction to a
closed subgroup of $G$ is also positive definite. In the remainder
of this section   we investigate when the restriction to a subgroup
preserves the positive definiteness. We start by defining the
restriction of a measure to a subgroup.

\begin{definition} Let $G$ be an LCAG.
Let\/ $\mu$ be a measure on\/ $G$ and let\/ $H$ be a closed subgroup
of\/ $G$. We define the restriction of\/ $\mu$ to\/ $H$ by
\[
\mu|_H(B) \, := \, \mu( B \cap H) \ts.
\]
for $B$ a Borel set in $G$.  Then\/ $\mu|_H$ is a measure on\/ $G$
with $\supp(\mu|_H) \subset H$, and can therefore be seen as a
measure on $H$.
\end{definition}

Note that since $H$ is closed in $G$, the characteristic function
$1_H$ is measurable and locally integrable. It is easy to see that
$\mu|_H =1_H \mu$.

\begin{lemma}[Restriction lemma - first version]\label{C1}
Let \/ $\mu$ be a discrete positive definite measure on\/ $G$, and
let\/ $H$ be a closed subgroup of\/ $G$. Then $\mu|_H$ is a positive
definite measure on\/ $H$.
\end{lemma}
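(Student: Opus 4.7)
The plan is to reduce the statement to the corresponding fact for positive definite functions via Proposition~\ref{P1}, exploiting the fact that the support function sees the subgroup structure transparently.

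First I would apply Proposition~\ref{P1} in the direction $(i) \Longrightarrow (iii)$ to the discrete positive definite measure $\mu$ on $G$. This gives that the support function
\[
f : G \longrightarrow \CC, \quad f(x) := \mu(\{x\}),
\]
is a positive definite function on $G$. Next I would observe that the restriction $f|_H$ is a positive definite function on $H$. This is immediate from the definition of positive definite function: for any $N \in \NN$ and $x_1, \ldots, x_N \in H \subset G$, the matrix $(f|_H(x_k - x_l))_{k,l} = (f(x_k - x_l))_{k,l}$ is positive definite because $f$ is positive definite on $G$.

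Now I would identify the support function of $\mu|_H$. Since $\mu$ is discrete, write $\mu = \sum_{p \in P} u(p)\ts\delta_p$ for some countable $P \subset G$; then
\[
\mu|_H \; = \; \sum_{p \in P \cap H} u(p)\ts\delta_p,
\]
which is a discrete measure on $H$, and its support function (as a function on $H$) is exactly $f|_H$, since $(\mu|_H)(\{x\}) = \mu(\{x\} \cap H) = f(x)$ for $x \in H$. Finally I would apply Proposition~\ref{P1} in the direction $(iii) \Longrightarrow (i)$, this time to the LCAG $H$ with the measure $\mu|_H$ and its support function $f|_H$, concluding that $\mu|_H$ is a positive definite measure on $H$.

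The only potential subtlety is making sure Proposition~\ref{P1} is indeed applicable in both directions on the appropriate groups, but since it is stated for a general LCAG and both $G$ and $H$ qualify (closed subgroups of LCAGs are LCAGs), there is no real obstacle. The argument is essentially a two-line translation between the measure and the function pictures provided by Proposition~\ref{P1}.
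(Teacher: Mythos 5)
Your proposal is correct and follows essentially the same route as the paper: pass from $\mu$ to its support function via Proposition~\ref{P1}, note that restricting a positive definite function to a subgroup trivially preserves positive definiteness, identify $f|_H$ as the support function of the discrete measure $\mu|_H$, and apply Proposition~\ref{P1} again on $H$. The only cosmetic difference is that the paper works with the discrete topologies $G_{\mathsf{d}}$ and $H_{\mathsf{d}}$ (direction $(iv)$ of Proposition~\ref{P1}) while you use direction $(iii)$; these are equivalent by that same proposition, so nothing changes.
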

\noindent{\bf Proof}: We denote by $G_{\mathsf{d}}$ and
$H_{\mathsf{d}}$, respectively, the groups $G$ and $H$ when equipped
with discrete topology. Let $f: G \to \CC$ be the support function
of $\mu$ given by $f(x) =\mu (\{ x \})$. Then, by
Proposition~\ref{P1}, $f$ is a positive definite function on
$G_{\mathsf{d}}$. This directly gives that the restriction $g : H
\to \CC, g(x)=f(x)$ is a positive definite function on
$H_{\mathsf{d}}$, and hence again by Proposition~\ref{P1} the
measure $\sum_{x \in H_{\mathsf{d}}} g(x) \delta_x$ is positive
definite measure on $H$. But this is exactly the desired statement.
\qed

Combining Proposition~\ref{P1} with Corollary~\ref{C1} we get the
following generalization of \cite[Lemma 8.4]{TAO}:

\begin{corollary}
Let\/ $L$ be a lattice in\/ $G$, let\/ $\eta : L \to \CC$ be a
function and let\/ $\mu = \eta \delta_{L}$. Then\/ $\mu$ is a
positive definite measure on\/ $G$ if and only if\/ $\eta$ is a
positive definite function on\/ $L$.
\end{corollary}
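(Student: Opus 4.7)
The plan is to deduce both implications from the equivalence in Proposition~\ref{P1} between positive definiteness of a discrete measure and positive definiteness of its support function, supplemented by the Restriction Lemma (Lemma~\ref{C1}) for the direction $(\Rightarrow)$.

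For the forward direction, I would argue as follows. Since $\supp(\mu) \subset L$, the restriction satisfies $\mu|_L = \mu$. By Lemma~\ref{C1}, $\mu$ is then a positive definite measure on the LCAG $L$, which carries the discrete topology as $L$ is a lattice. Proposition~\ref{P1}, applied with $G$ replaced by $L$, then forces the support function of $\mu$ on $L$ to be a positive definite function on $L$. But by construction this support function is exactly $\eta$.

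For the reverse direction, I would use the implication $(iii)\Rightarrow(i)$ of Proposition~\ref{P1}: it suffices to show that the support function $f : G \to \CC$ of $\mu$, given by $f(x) = \eta(x)$ for $x \in L$ and $f(x) = 0$ otherwise, is a positive definite function on $G$. Fix $x_1, \ldots, x_N \in G$ and declare $i \sim j$ iff $x_i - x_j \in L$; since $L$ is a subgroup this is an equivalence relation on indices. After permuting so that classes are contiguous, the matrix $(f(x_k - x_l))_{k,l=1,\ldots,N}$ becomes block diagonal, because $f$ vanishes outside $L$ and hence entries indexed by pairs from different classes are zero. Within a single class, choose a representative $x_{i_0}$ and write $x_i = x_{i_0} + \ell_i$ with $\ell_i \in L$; the corresponding block reads $(\eta(\ell_k - \ell_l))$, which is positive semidefinite by positive definiteness of $\eta$ on $L$. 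Being a block diagonal assembly of positive semidefinite blocks, the whole matrix is positive semidefinite, and $f$ is positive definite on $G$.

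I do not anticipate any serious obstacle. The forward direction is essentially bookkeeping once one observes the trivial identity $\mu|_L = \mu$. The only mild substance lies in the backward direction, where one must notice that $f$ being supported on a subgroup makes the coset decomposition automatically block-diagonalise the Gram-like matrix, after which positive semidefiniteness of the blocks is inherited directly from the hypothesis on $\eta$.
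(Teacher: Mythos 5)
Your proof is correct and follows essentially the route the paper itself indicates (it states the corollary without proof, as a combination of Proposition~\ref{P1} with the restriction lemma): the forward direction via $\mu|_L=\mu$, the restriction lemma and Proposition~\ref{P1} on $L$, and the converse via Proposition~\ref{P1} applied to the zero-extension of $\eta$. Your coset/block-diagonal verification that the zero-extension of a positive definite function on a subgroup is positive definite on all of $G$ is exactly the detail the paper leaves implicit, and it is carried out correctly.
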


If $H$ is an open subgroup of $G$, it is automatically closed. In
this case, it follows immediately from $C_{\mathsf{c}}(H) \subset
C_{\mathsf{c}}(G)$ that the restriction of any positive definite
measure on $G$ to $H$ is a positive definite measure on $H$.

\begin{lemma}[Restriction lemma - second version]Let $G$ be an LCAG.
Let\/ $\mu$ be a positive definite measure on\/ $G$, and let\/ $H$
be a open subgroup of\/ $G$. Then $\mu|_H$ is a positive definite
measure on\/ $H$.
\end{lemma}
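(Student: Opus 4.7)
The plan is to exploit the hint already given in the text: openness of $H$ gives an inclusion $C_{\mathsf{c}}(H) \hookrightarrow C_{\mathsf{c}}(G)$ by extension by zero, and essentially everything reduces to this inclusion. First I would verify the inclusion. For $f \in C_{\mathsf{c}}(H)$ denote by $E(f)$ the function on $G$ that equals $f$ on $H$ and $0$ elsewhere. Continuity of $E(f)$ on $H$ is inherited, while at any point $x\notin H$ the coset $x+H$ is an open neighborhood of $x$ disjoint from $H$ on which $E(f)$ is identically zero; together with the obvious compactness of the support, this places $E(f) \in C_{\mathsf{c}}(G)$.

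Second, I would check that this extension intertwines the convolutions and the tilde operation. Writing $*_{H}$ for convolution in $H$ with respect to the Haar measure on $H$ obtained by restricting the Haar measure of $G$, a direct calculation should give
\begin{equation*}
  E(f) * E(g) \,=\, E(f *_{H} g)
  \quad\text{and}\quad
  \widetilde{E(f)} \,=\, E(\widetilde{f})
\end{equation*}
for all $f,g \in C_{\mathsf{c}}(H)$. For the first identity, both sides vanish outside $H$ (any $x\notin H$ forces $x-y\notin H$ for every $y\in H$, hence $E(f)(x-y)=0$), while for $x\in H$ the integrand is supported in $H$ and restricts to $f(x-y)g(y)$, giving $(f*_H g)(x)$. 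The second identity is immediate from $\widetilde{E(f)}(x) = \overline{E(f)(-x)}$ together with the fact that $H$ is stable under inversion.

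Third, the identification of $\mu|_H$ with a measure on $H$ amounts precisely to the relation $\mu|_H(g) = \mu(E(g))$ for $g \in C_{\mathsf{c}}(H)$; this follows from $\mu|_H = 1_H \mu$ and the support condition on $E(g)$. Combining the three observations, for every $f \in C_{\mathsf{c}}(H)$,
\begin{equation*}
  \mu|_H\bigl(f *_{H} \widetilde{f}\bigr)
  \,=\, \mu\bigl( E(f *_{H} \widetilde{f}) \bigr)
  \,=\, \mu\bigl( E(f) * \widetilde{E(f)} \bigr)
  \,\geq\, 0,
\end{equation*}
where the final inequality is positive definiteness of $\mu$ on $G$. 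This is exactly positive definiteness of $\mu|_H$ on $H$.

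I do not anticipate a genuine obstacle: the whole argument is a bookkeeping exercise, and the only thing openness of $H$ is really used for is to guarantee that $E$ preserves continuity at points off $H$. A minor point worth stating explicitly is the Haar-measure normalization on $H$, which must be taken so that integration of $E(g)$ over $G$ coincides with integration of $g$ over $H$; since $H$ is open this is the standard choice and all preceding calculations go through unchanged.
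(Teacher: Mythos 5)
Your proof is correct and follows essentially the same route as the paper's: both rest on the inclusion $C_{\mathsf{c}}(H)\subset C_{\mathsf{c}}(G)$ given by openness of $H$ and the identity $\mu|_H(f*\widetilde{f})=\mu(f*\widetilde{f})$. You merely make explicit the bookkeeping (continuity of the zero-extension, compatibility of the convolutions and of $\widetilde{\,\cdot\,}$, and the Haar-measure normalization) that the paper leaves implicit.
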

\noindent{\bf Proof}: As $H$ is open in $G$, we have
$C_{\mathsf{c}}(H) \subset C_{\mathsf{c}}(G)$. Therefore, for all $f
\in C_{\mathsf{c}}(H)$ we have
\[
\mu|_H(f*\widetilde{f}) \, = \, \mu(f*\widetilde{f}) \, \geq \, 0
\ts ,
\]
with the first equality follows from the fact that the support of $
f*\widetilde{f}$ is contained in $H$ and the second equality follows
as  $f$ belongs to $ \in C_{\mathsf{c}}(G)$ as well. \qed

\section{On relatively dense sets of $a$-visible Bragg
peaks}\label{section-main} In this section we restrict our attention
to  positive and positive definite measures in $\RR^d$. We will
combine our previous considerations with certain ingredients from
mathematical diffraction theory  to obtain the Meyer property for
certain subsets of the set of Bragg peaks and to provide a new
characterization of the Meyer property.

\bigskip

We will be interested in Meyer sets. There are various
characterizations of Meyer sets in Euclidean space (see e.g.
\cite{LAG1,MOO,MEY}). Here, we will use that a subset $\varGamma$ of
$\RR^d$ is \textit{Meyer} if and only if $\varGamma$  is relatively
dense and $\varGamma - \varGamma$ is weakly uniformly discrete. A
more common definition requires that $\varGamma$ is relatively dense
and $\varGamma - \varGamma$ is uniformly discrete. However, based on
\cite{LAG1}  these two definitions are shown to be equivalent in the
appendix of \cite{BLM}.

Next we will  review  the theory of mathematical diffraction. For
 overviews of this theory  we refer the reader to \cite{BG,Lenz2,Lenz3}.
During the entire section  $\{ A_n \}_n$ will be a \textit{van Hove
sequence} in $\RR^d$ i.e. the $A_n$ will be relatively compact
subsets of $\RR^d$  with
$$|\partial^R A_n|/ |A_n| \to 0, n\to
\infty$$ for all $R>0$. Here,  $|\cdot|$ denotes Lebesgue measure
and, for  $B\subset \RR^d$,  the set  $\partial^R B$ consists of all
$x\in \RR^d$ whose distance from both $B $ and $\RR^d \setminus B$
does not exceed $R$. Obviously, any sequence of balls  (cubes) with
radius (sidelength) tending to $\infty$ is a van Hove sequence. For
a translation bounded measure  $\omega$ on $\RR^d$ we define
$$\gamma_n :=\frac{ \omega|_{A_n} * \widetilde{\omega|_{A_n}}
}{|A_n|}. $$ Here, for a measure $\nu$ we denote by $\nu|_{A}$ the
restriction of $\nu$ to $A$ and by $\widetilde{\nu}$ the measure
with $\widetilde{\nu} (f) = \overline{ \nu(\widetilde{f}) }$.

\begin{definition} Let $\omega$ be a translation bounded measure.
Any cluster point $\gamma$ of the sequence $(\gamma_n)_n$ in the
vague topology  is called \textit{an  autocorrelation} of $\omega$.
\end{definition}

\begin{remark} Let  $\omega$ be a translation bounded measure in
$\RR^d$. Let  $U$ be  an open relatively compact subset of $\RR^d$.
Then, $C:= \sup |\omega| (x+ U)  <\infty$. As shown in \cite{BL} the
space $\mathcal{M}_{U,C}$ of translation bounded measures $\mu$ on
$\RR^d$ with $\sup |\mu| ( x + U) \leq C$ is compact in the vague
topology and all  $\gamma_n$ belong to this space. It follows that
the sequence $\gamma_n$ always has cluster points.
\end{remark}

As is well-known (and not hard to see) any autocorrelation $\gamma$
of a translation bounded $\omega$  is positive definite. For this
reason its Fourier transform $\widehat{\gamma}$ exists and is a
positive measure on the dual group $\widehat{\RR^d}$ of $\RR^d$, see
\cite{ARMA,BL} for further discussion. We call this Fourier
transform \textit{a diffraction measure} for $\omega$. We define the
autocorrelation of a Delone set $\Lambda \subset \RR^d$ to be the
autocorrelation of its Dirac comb $\delta_\Lambda = \sum_{x \in
\Lambda} \delta_x \,.$

\medskip

Let us recall next  the definition of $a$-visible Bragg peaks, see
\cite{NS2} as well.

\begin{definition}
Let $\mu$ be a translation bounded measure on $\RR^d$, and let
$\gamma$ be any autocorrelation of $\mu$. For each $a >0$ we call
\[
I(a) \, := \,  \{ \chi \in \widehat{\RR^d} \, | \,
\widehat{\gamma}(\{ \chi \}) \, \geq \, a \} \,
\]
the set of \textit{$a$-visible Bragg peaks of $\mu$.}
\end{definition}

After this review of diffraction theory we now note the following
consequence of the Sparseness Lemma ~\ref{P2}.

\begin{lemma}\label{C2}
Let $\mu$ be a positive and positive definite translation bounded
measure on $\RR^d$. If the set
\[
I \, := \,  \bigl\{ x \in \RR^d \, | \, \mu(\{ x \}) \, \geq \, a \bigr\} \,.
\]
is relatively dense for some $a > (\sqrt{3}-1) \mu(\{ 0\})$, then
$I$ is a Meyer set.
\end{lemma}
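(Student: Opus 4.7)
The plan is to observe that this lemma is really a direct synthesis of two ingredients already assembled in the paper: the Sparseness Lemma~\ref{P2}, and the equivalent definition of a Meyer set recalled at the beginning of Section~\ref{section-main}. According to that characterization, a subset $\varGamma \subset \RR^d$ is Meyer if and only if $\varGamma$ is relatively dense and $\varGamma - \varGamma$ is weakly uniformly discrete. So I would structure the proof around verifying these two properties for $I$.

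First, relative density of $I$ is given by hypothesis, so no work is required there. Second, to verify that $I - I$ is weakly uniformly discrete, I would simply apply the Sparseness Lemma~\ref{P2}: the measure $\mu$ is assumed positive, positive definite and translation bounded on $\RR^d$ (a special LCAG), and the threshold $a$ is assumed to satisfy precisely the bound $a > (\sqrt{3}-1)\mu(\{0\})$ needed there. The Sparseness Lemma then yields directly that $I - I$ is weakly uniformly discrete.

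Combining these two observations with the stated characterization of Meyer sets gives the conclusion, and that is the entire argument. There is really no obstacle: the Sparseness Lemma has been designed precisely to deliver the weak uniform discreteness of $I - I$, and the translation from Euclidean LCAG back to $\RR^d$ is immediate. If anything requires a word of comment, it is only the reminder that $\mu(\{0\}) > 0$ whenever $\mu$ is non-zero (as noted in the remark after Lemma~\ref{P2}), so that the bound on $a$ is meaningful; but this is not needed for the logical flow of the proof, only for its non-triviality.
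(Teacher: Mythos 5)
Your proposal is correct and coincides with the paper's own proof: the paper likewise applies the Sparseness Lemma~\ref{P2} to conclude that $I-I$ is weakly uniformly discrete and then invokes the relative density of $I$ together with the stated characterization of Meyer sets. No gaps.
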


\noindent{\bf Proof}: By Lemma~\ref{P2} the set $I-I$ is weakly
uniformly discrete. As $I$ is also relatively dense, the statement
follows. \qed

\begin{remark}
A natural question is if the lower bound $(\sqrt{3}-1) \mu(\{ 0\})$
can be improved in Lemma~\ref{C2}. We provide an example which shows
that it cannot be decreased under $\frac{1}{2} \mu(\{ 0\})$: Let
$\mu=\delta_{\ZZ}+\delta_{\pi \ZZ}$. Then for all $a >1 =\frac{1}{2}
\mu(\{ 0\})$ the set
\[
\bigl\{ x \in \RR \, |  \, \mu(\{ x \}) \, \geq \, a \bigr\} \, = \, \{ 0\} \,.
\]
is not relatively dense.  However, in the case   $a=1$ the set
\[
I\,:=\, \bigl\{ x \in \RR \, | \, \mu(\{ x \}) \, \geq \, a \bigr\} \, = \, \ZZ \, \cup \, \pi \ZZ \,,
\]
is relatively dense and
\[
I \,- \,I \, = \, \ZZ \oplus \pi \ZZ \,,
\]
which is dense in $\RR$.
\end{remark}

We are now proceeding to prove our main result.

\begin{theorem}\label{T1}
(a) Let $\mu$ be a positive translation bounded measure on $\RR^d$
and let $\gamma$ be an autocorrelation of $\mu$. If the set $I(a)$
of $a$ visible Bragg peaks of $\mu$ is relatively dense for some $a>
(\sqrt{3}-1) \widehat{\gamma}(\{ 0 \})$,  then $I(a)$ is a Meyer
set.

(b) Let $\vL$ be a Delone set in $\RR^d$ and let $\gamma$ be an
autocorrelation of $\vL$. If the set $I(a)$ of $a$ visible Bragg
peaks of $\mu$ is relatively dense for some $a> (\sqrt{3}-1)
\widehat{\gamma}(\{ 0 \})$,  then for all $(\sqrt{3}-1)
\widehat{\gamma}(\{ 0 \})< b \leq a$ the set $I(b)$ is a Meyer set.
\end{theorem}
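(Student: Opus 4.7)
The plan is to apply Lemma~\ref{C2} to the diffraction measure $\widehat{\gamma}$, viewed as a measure on $\widehat{\RR^d} \cong \RR^d$, since the set $I(a)$ is exactly the upper level set that appears in that lemma.

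First I would verify the three hypotheses of Lemma~\ref{C2} for $\widehat{\gamma}$: it should be positive, positive definite, and translation bounded. The first property and translation boundedness are already recorded in the text, as $\gamma$ is positive definite and translation bounded on $\RR^d$. The middle property, positive definiteness of $\widehat{\gamma}$, is where some work is needed. Since $\mu$ is positive, each approximant $\gamma_n = \mu|_{A_n} * \widetilde{\mu|_{A_n}} / |A_n|$ is a normalized convolution of positive measures, hence positive; passing to a vague cluster point makes $\gamma$ positive. A direct computation also shows $\widetilde{\gamma_n} = \gamma_n$, and so $\widetilde{\gamma} = \gamma$. Combined with Fourier inversion for positive definite measures, this yields $\widehat{\widehat{\gamma}} = \widetilde{\gamma} = \gamma \geq 0$, which is exactly the statement that $\widehat{\gamma}$ is the Fourier transform of a positive measure, i.e.\ positive definite.

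With $\widehat{\gamma}$ now positive, positive definite, and translation bounded, Lemma~\ref{C2} applies: since $I(a)$ is relatively dense by hypothesis and $a > (\sqrt{3}-1)\widehat{\gamma}(\{0\})$ matches the threshold condition on the zero-mass, Lemma~\ref{C2} gives that $I(a)$ is Meyer, proving (a). For (b), the Dirac comb of a Delone set $\vL$ is positive and translation bounded, so (a) applies to $\delta_\vL$; for any $b$ with $(\sqrt{3}-1)\widehat{\gamma}(\{0\}) < b \leq a$, the inclusion $I(b) \supseteq I(a)$ makes $I(b)$ relatively dense, and reapplying (a) with $b$ in place of $a$ yields that $I(b)$ is Meyer.

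The main obstacle is the positive definiteness of the diffraction measure $\widehat{\gamma}$, which rests on the symmetry $\widetilde{\gamma} = \gamma$ of the autocorrelation together with the double Fourier transform relation; once this is confirmed, the theorem reduces to a direct application of the Sparseness Lemma on the diffraction side, and part (b) is a monotonicity observation on top of part (a).
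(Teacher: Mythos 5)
Your proposal is correct and follows essentially the same route as the paper: establish that $\widehat{\gamma}$ is positive, positive definite and translation bounded, apply the Sparseness Lemma (Lemma~\ref{C2}) to it, and deduce (b) from (a) via the inclusion $I(a)\subset I(b)$. The only difference is that you unpack the positive definiteness of $\widehat{\gamma}$ explicitly through $\widetilde{\gamma}=\gamma\geq 0$ and Fourier inversion, where the paper simply cites \cite{ARMA1,BF}; this is a correct filling-in of that step.
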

\noindent{\bf Proof}: (a) If $\mu$ is a positive translation bounded
measure on $\RR^d$, any autocorrelation $\gamma$ is positive and
positive definite. Therefore, so is $\widehat{\gamma}$ \cite{ARMA1},
\cite{BF}. Therefore, applying the result of Lemma~\ref{C2} to
$\widehat{\gamma}$ we obtain the first statement.

(b)  This  follows from (a)  as $I(a) \subset I(b)$ if $(\sqrt{3}-1)
\widehat{\gamma}(\{ 0 \})< b \leq a$. \qed.

\begin{remark} Let us put the previous result in perspective.
\begin{itemize}
\item In Theorem~\ref{T1}, if some $I(a)$ is relatively dense, then $I(b)$
is relatively dense for all $b <a$. The same is not necessarily true
for $b >a$ as can be easily seen by considering a variant of the
example given in the remark following Lemma  \ref{C2} above. In
fact, the mentioned example has the desired property (but is not a
Delone set). To obtain a similar feature with  a Delone set, we can
consider
\[
\Lambda= \ZZ \times \ZZ \cup \left( (\frac{1}{2},0) +  \ZZ \times
(\pi \ZZ) \right).
\]

% Consider the following example: Let $\vL = \ZZ \backslash n\ZZ$.
%Then a simple computation shows that the autocorrelation of $\vL$ is
%\[
%\gamma \, = \, \bigl( 1-\frac{2}{n}\bigr) \delta_{\ZZ}+\frac{1}{n} \delta_{n \ZZ} \,.
%\]
%Therefore, the diffraction of $\vL$ is
%\[
%\widehat{\gamma}\, = \, \bigl( 1-\frac{2}{n}\bigr) \delta_{\ZZ}+\frac{1}{n^2} \delta_{\frac{1}{n}\ZZ} \,.
%\]
%Then
%\[
%I(1-\frac{2}{n})\,= \, \ZZ \,,
%\]
%is relatively dense, but $I(b) =\{ 0\}$ for all $1-\frac{2}{n} < b < \widehat{\gamma}(\{ 0\})=1-\frac{2}{n}+\frac{1}{n^2}$.

\item
This result can be compared with a corresponding result when the
underlying set is Meyer itself. Then, for each $0 < a<
\widehat{\gamma}(\{ 0\})$ the set
\[
I(a)\,=\, \{ \chi \in \widehat{\RR^d} \, | \, \widehat{\gamma}(\{
\chi \}) \, \geq \,  a \} \,,
\]
of $a$-visible Bragg peaks is  Meyer \cite{NS2,NS5}.
\end{itemize}

\end{remark}

We finish this section by using the preceding results to provide a
new characterization of Meyer sets in terms of positive definite
measures.

\begin{theorem}
Let\/ $\vL \subset \RR^d$ be relatively dense. Then, the following
assertions  are equivalent.
    \begin{itemize}
  \item[$(i)$]$\vL$ is a Meyer set.
  \item[$(ii)$] For each $0 < \varepsilon < 1$ there exists a positive and positive definite measure $\mu$ such that, for all $x \in \vL$ we have
  \[
  \mu(\{ x \}) \, > \, \varepsilon \mu(\{ 0 \}) \,.
  \]
  \item[$(iii)$] There exists a positive and positive definite measure $\mu$ and some $ \sqrt{3}-1 < \epsilon <1$ such that, for all $x \in \vL$ we have
  \[
  \mu(\{ x \}) \, > \, \varepsilon \mu(\{ 0\}) \,.
  \]
  \item[$(iv)$] For each $0 < \varepsilon < 1$ there exists a Meyer set $\Gamma \subset \widehat{\RR^d}$, with autocorrelation $\gamma$, such that
  \[
  \vL \, \subset \, I( \varepsilon \widehat{\gamma}(\{0 \}) \,,
  \]
where $I\bigl( \varepsilon \widehat{\gamma}(\{0 \})\bigr)$ is the set of $\varepsilon \widehat{\gamma}(\{0 \})$-visible peaks of $\Gamma$.
  \item[$(v)$] There exists some $0 < \varepsilon < 1$ and a Meyer set $\Gamma \subset \widehat{\RR^d}$, with autocorrelation $\gamma$, such that
  \[
  \vL \, \subset \, I( \varepsilon \widehat{\gamma}(\{0 \}) \,.
  \]
  \item[$(vi)$] For each $0 < \varepsilon < 1$ there exists a Delone set $\Gamma \subset \widehat{\RR^d}$, with autocorrelation $\gamma$, such that
  \[
  \vL \, \subset \, I( \varepsilon \widehat{\gamma}(\{0 \}) \,.
  \]
  \item[$(vii)$] There exists some $\sqrt{3}-1 < \varepsilon < 1$ and a Delone set $\Gamma \subset \widehat{\RR^d}$, with autocorrelation $\gamma$, such that
  \[
  \vL \, \subset \, I( \varepsilon \widehat{\gamma}(\{0 \}) \,.
  \]
\end{itemize}
\end{theorem}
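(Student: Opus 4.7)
The plan is to establish the seven equivalences via a cycle
$(i) \Rightarrow (iv) \Rightarrow (ii) \Rightarrow (iii) \Rightarrow (i)$
together with the side implications $(iv) \Rightarrow (v), (vi)$,
$(v) \Rightarrow (i)$, and $(vi) \Rightarrow (vii) \Rightarrow (iii)$. All the
interesting content sits in the single arrow $(i) \Rightarrow (iv)$; the
remaining implications are short and follow from Lemma~\ref{C2}, the
Fourier-dual nature of positive and positive definite measures, and the
Meyer-set diffraction result of \cite{NS2,NS5}. Concretely: $(ii) \Rightarrow (iii)$
is trivial on choosing $\varepsilon \in (\sqrt{3}-1, 1)$. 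For
$(iii) \Rightarrow (i)$ I would apply Lemma~\ref{C2} to $\mu$, noting that
$\{x : \mu(\{x\}) \geq \varepsilon\mu(\{0\})\}$ contains the relatively dense
$\vL$ and therefore is Meyer; hence $\vL$ itself is Meyer as a relatively
dense subset of a Meyer set. The implications $(iv) \Rightarrow (v), (vi)$
use that Meyer sets are Delone, while $(vi) \Rightarrow (vii)$ merely
restricts $\varepsilon$. For both $(vii) \Rightarrow (iii)$ and
$(iv) \Rightarrow (ii)$ I take $\mu := \widehat{\gamma}$: any autocorrelation
$\gamma$ is positive and positive definite, so by \cite{ARMA1,BF} $\widehat{\gamma}$
on $\RR^d \cong \widehat{\widehat{\RR^d}}$ is positive and positive definite,
and the containment $\vL \subset I(\varepsilon\widehat{\gamma}(\{0\}))$ becomes
exactly the pointwise bound required (after a harmless shrinking of
$\varepsilon$ to get strict inequality). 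Finally, $(v) \Rightarrow (i)$ invokes
\cite{NS2,NS5}: when $\Gamma$ is Meyer every $I(a)$ with
$0 < a < \widehat{\gamma}(\{0\})$ is Meyer, so again $\vL$ is a relatively
dense subset of a Meyer set and hence itself Meyer.

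The heart of the argument is the construction in $(i) \Rightarrow (iv)$. By
Meyer's structure theorem I embed $\vL \subset P$, where
$P = \pi_1(\tilde L \cap (\RR^d \times W))$ is a model set coming from a
cut-and-project scheme $(\RR^d, H, \tilde L)$ with relatively compact window
$W \subset H$. I then define $\Gamma$ to be the model set in $\widehat{\RR^d}$
built from the dual cut-and-project scheme
$(\widehat{\RR^d}, \widehat{H}, \tilde L^{\circ})$, where $\tilde L^{\circ}$ is the
annihilator of $\tilde L$ inside $\widehat{\RR^d}\times\widehat{H}$, with window
a small symmetric neighbourhood $V \subset \widehat{H}$ of $0$. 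As $\Gamma$ is a
model set it is automatically Meyer. The standard diffraction formula for model
sets then gives
\[
   \widehat{\gamma}(\{k\}) \,=\, \dens(\tilde L^{\circ})^{2}\,
   \bigl|\widehat{1_V}(k^{\star})\bigr|^{2}
\]
for $k \in \pi_1((\tilde L^{\circ})^{\circ}) = \pi_1(\tilde L) \supset P \supset \vL$,
where $k^{\star}$ denotes the star-image with respect to the original scheme; in
particular $\widehat{\gamma}(\{0\}) = \dens(\tilde L^{\circ})^{2}|V|^{2}$. Since
$\vL \subset P$ every $k \in \vL$ has $k^{\star} \in W$, and a direct Taylor-type
estimate of the form
$\bigl|\widehat{1_V}(y) - |V|\bigr| \leq 2\pi\,\diam(V)\,|y|\,|V|$
shows that choosing $\diam(V)$ small enough, depending only on $\varepsilon$ and
$\sup_{y\in W}|y|$, yields $\bigl|\widehat{1_V}(k^{\star})\bigr| > \sqrt{\varepsilon}\,|V|$
uniformly for $k \in \vL$. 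Consequently
$\widehat{\gamma}(\{k\}) > \varepsilon\,\widehat{\gamma}(\{0\})$ on $\vL$, which
is exactly the statement of $(iv)$.

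The principal obstacle I anticipate is the bookkeeping: getting the correct
normalisation constants in the diffraction formula for $\Gamma$, identifying
$\tilde L^{\circ}$ and the corresponding dual star-map with matching Pontryagin
conventions, and confirming the uniform quantitative estimate for $\widehat{1_V}$
over the fixed bounded set $W$. Beyond this, everything reduces to the
strict-versus-non-strict manipulation of $\varepsilon$ and to direct appeals to
Lemma~\ref{C2} and the previously cited results.
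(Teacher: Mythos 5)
Your proposal is correct in outline, and its skeleton of easy implications coincides with the paper's (the only cosmetic difference being that you close the Delone branch via $(vii)\Rightarrow(iii)$ and Lemma~\ref{C2}, where the paper routes $(vii)\Rightarrow(i)$ through Theorem~\ref{T1} --- the same Sparseness Lemma in the end; you are in fact more careful than the paper about the strict--versus--non-strict $\varepsilon$ adjustments). The genuine divergence is in the central implication $(i)\Rightarrow(iv)$. The paper stays entirely inside the $\varepsilon$-dual calculus: it sets $\varepsilon'=1-\varepsilon$, forms $\vL'=(\vL^{\varepsilon'/2})^{\varepsilon'/2}$ and $\vG=(\vL')^{\varepsilon'/2}$, and then quotes the consistent-phase estimate of \cite[Thm.~3.1]{NS2} --- if $\supp(\gamma)\subset\Delta$ then $|\widehat{\gamma}(\{x+y\})-\widehat{\gamma}(\{x\})|\leq\varepsilon'\widehat{\gamma}(\{0\})$ for $y\in\Delta^{\varepsilon'}$ --- together with the inclusion $\vL\subset\vG^{\varepsilon'/2}\subset(\vG-\vG)^{\varepsilon'}$ from \cite{MOO}. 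You instead embed $\vL$ in a model set via Meyer's structure theorem, pass to the dual cut-and-project scheme with a small window $V$, and read the bound off the explicit intensity formula $\widehat{\gamma}(\{k\})=c\,|\widehat{1_V}(k^{\star})|^{2}$ for regular model sets, using $|\widehat{1_V}(y)-\theta(V)|\leq\theta(V)\sup_{\chi\in V}|\chi(y)-1|$ on the compact window closure. The two constructions are close relatives --- by \cite{MOO} the $\eta$-dual $\vL^{\eta}$ is itself a model set for the dual scheme, so the paper's $\vG$ is essentially your $\Gamma$ --- but the inputs differ in strength: your argument leans on the full diffraction formula for regular model sets (Hof/Schlottmann, \cite[Ch.~9]{TAO}) plus the embedding theorem, and so is more concrete and quantitatively transparent, whereas the paper's needs only the softer almost-periodicity statement \cite[Thm.~3.1]{NS2} and avoids cut-and-project machinery altogether. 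Two small points to fix when writing yours up: prefer the character bound above to the Euclidean Taylor estimate unless you first arrange (as you may, by the Euclidean version of Meyer's theorem) that the internal space is $\RR^{n}$; and make sure the window $V$ is chosen regular (e.g.\ an open ball) so that the autocorrelation of $\Gamma$ is unique and the intensity formula applies.
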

\noindent{\bf Proof}: For any subset $\Sigma$ of $\RR^d$ and any
$\varepsilon >0$ we define
$$\Sigma^{\varepsilon}:=\{ \chi \in \widehat{\RR^d} : |\chi (x) - 1|
< \varepsilon\;\: \mbox{for all $x\in \Sigma$}\}.$$ Similarly, we
define for any subset  $\Xi $ of $\widehat{\RR^d}$ and any
$\varepsilon >0$
$$\Xi^{\varepsilon}:=\{ x\in {\RR^d} : |\chi (x) - 1|
< \varepsilon\;\: \mbox{for all $\chi\in \Sigma$}\}.$$ We will use
below that Meyer sets can be characterized via these sets.

\medskip

 The implications $(ii) \Longrightarrow
(iii)\, , \, (iv) \Longrightarrow (v)$ and $(iv) \Longrightarrow
(vi) \Longrightarrow (vii)$ are obvious, while $(iii)
\Longrightarrow (i)$ follows from Lemma~\ref{C2}. The implication
$(vii) \Longrightarrow (i)$ follows from Theorem~\ref{T1}.  $(iv)
\Longrightarrow (ii)$ follows from the fact that $\mu
=\widehat{\gamma}$ is positive and positive definite \cite{ARMA,BF}.
 $(v) \Longrightarrow (i)$ follows from \cite[Thm~5.3 (iii)]{NS2}.
To complete the proof we prove $(i) \Longrightarrow (iv)$.

Let $\varepsilon \in (0,1)$ and let $\varepsilon'=1-\varepsilon$ and
$\vL'=\left(\vL^{\frac{\varepsilon'}{2}}\right)^{\frac{\varepsilon'}{2}}$.
Then $\vL \subset \vL'$ and
$\left((\vL')^{\frac{\varepsilon'}{2}}\right)^{
\frac{\varepsilon'}{2}}=\vL'$ \cite{MOO}.  Let $\vG =
(\vL')^{\frac{\varepsilon'}{2}}$. We prove that $\Gamma$ has the
desired property.  Let $\gamma$ be an autocorrelation of $\vG$. Then
$\supp(\gamma) \subset \vG-\vG=: \Delta$.  Therefore, by
\cite[Thm.~3.1]{NS2}, for all $y \in \Delta^{\varepsilon'}$ and all
$x \in \RR^d$ we have
\[
\bigl| \widehat{\gamma}(\{ x+y\}) \, - \, \widehat{\gamma}(\{ x \}) \bigr| \, \leq \, \varepsilon' \widehat{\gamma}(\{0 \}) \ts .
\]
Now, by \cite[proof of Cor.~6.8]{MOO} we have
\[
\vG^{\frac{\epsilon'}{2}} \, \subset \, (\vG-\vG)^{\varepsilon'} \, =\, \Delta^{\varepsilon'} \,.
\]
This implies
\[
\vL \, \subset \,  \vL' \, \subset \,  \vG^{\frac{\epsilon'}{2}} \,
\subset \, \Delta^{\varepsilon'} \,.
\]
 Therefore, for all $y \in \vL \subset
\Delta^{\varepsilon'}$, we have
\[
\bigl| \widehat{\gamma}(\{ y\}) \, - \, \widehat{\gamma}(\{ 0 \}) \bigr| \, \leq \, \varepsilon' \widehat{\gamma}(\{0 \}) \,.
\]
This gives
\[
 \widehat{\gamma}(\{ y\}) \, \geq \, \bigl( 1-\varepsilon' \bigr) \widehat{\gamma}(\{0 \})=\varepsilon \widehat{\gamma}(\{0
 \}),
\]
which  finishes the proof. \qed

\end{document}